\def\ds{\displaystyle}
\numberwithin{equation}{section}
\theoremstyle{plain}
\newtheorem{theorem}{Theorem}[section]
\newtheorem{corollary}[theorem]{Corollary}
\newtheorem{lemma}[theorem]{Lemma}
\newtheorem{proposition}[theorem]{Proposition}
\theoremstyle{definition}
\theoremstyle{remark}
\numberwithin{equation}{section}
\numberwithin{table}{section}
\numberwithin{figure}{section}
\def\ds{\displaystyle}
\newcommand{\R}{\mathbb{R}}
\newcommand{\Z}{\mathbb{Z}}
\newcommand{\bea}{\begin{eqnarray*}}
\newcommand{\eea}{\end{eqnarray*}}
\newcommand{\bean}{\begin{eqnarray}}
\newcommand{\eean}{\end{eqnarray}}
\begin{document}

\title[New addition formula for the little $q$-Bessel functions]{New addition formula for the little $q$-Bessel functions}
\author{Fethi BOUZEFFOUR}
\address{Department of mathematics, College of Sciences\\ King Saud University,
 P. O Box 2455 Riyadh 11451, Saudi Arabia.} \email{fbouzaffour@ksu.edu.sa}

\subjclass[2000]{33D15,33D45, 33D80.}%
\keywords{Basic hypergeometric functions, Basic orthogonal polynomials and functions, Addition formula, Product formula.}%

\begin{abstract}
Starting from the addition formula for little $q$-Jacobi polynomials, we
derive a new addition formula for the little $q$-Bessel functions.
The result is obtained by the use of a limit transition. We also
establish a product formula for little $q$-Bessel functions with a positive and
symmetric kernel.
\end{abstract}
\maketitle
\section{Introduction}
The Bessel functions are defined by \cite{Wa}
\begin{equation}
J_{\alpha}(x)= \sum_{k=0}^{\infty}\frac{(-1)^k(z/2
)^{\alpha+2k}}{k!\Gamma(\alpha+k+1)}\label{1}.
 \end{equation}
These functions satisfy several addition formulas \cite{Wa, Vi, Erdely}. For example the Graf's addition formula
\begin{equation}
(\frac{y-s^{-1}x}{y-sx})^{\frac{n}{2}}J_{n}(\sqrt{(y-s^{-1}x)(y-sx)})=\sum_{k=-\infty}^{\infty}s^k
J_k(x)J_{n+k}(y)
\end{equation}
where $|ys^{\pm 1}|<|x|,$ for general $n$ and due to Neumann (1867) for $n=0$ (see \cite{Wa}).\\
Another addition formula obtained by Gegenbauer, is more connected
with the theory of spherical waves. It is
\begin{align}
&(x^2+y^2-2xy \cos(\phi))^{-\frac{1}{2}\alpha}J_{\alpha}(\sqrt{x^2+y^2-2xy \cos(\phi)})
\\&=2^{\alpha}\Gamma(\alpha)\sum_{k=0}^{\infty}(\alpha+k)x^{-\alpha}\nonumber
J_{\alpha+k}(x)y^{-\alpha}J_k(y)C_k^{\alpha}(\cos(\phi)),\label{222}
\end{align}
where $C_k^{\alpha}(\cos(\phi))$ are the ultraspherical polynomials
(see \cite{Sheme}).\\Formula (1.3) implies a product formula for Bessel functions, important in harmonic analysis and applied mathematics. It is related to the
generalized translation associated with the Bessel operator
introduced by J. Delsarte \cite{Delsarte}. Also, the product formula together
with its positivity plays a central role in the development of the
classical harmonic analysis.\\
Many $q$-analogues of the Bessel
functions exist. The oldest ones were introduced by Jackson
in 1903--1905. They were rewritten in modern notation by Ismail \cite{Ism}.
Other $q$-analogues can be obtained as formal limit cases of the
three $q$-analogues of Jacobi polynomials, i.e., of little $q$-Jacobi
polynomials, big $q$-Jacobi polynomials and Askey-Wilson
polynomials. For this reason we propose to speak about little
$q$-Bessel functions, big $q$-Bessel functions and AW type $q$-Bessel
functions respectively for the corresponding limit cases (see \cite{Askey-koelink}). Also observe that the little $q$-Bessel functions were already given by Jackson (1904), and that they are often called Jackson's third $q$-Bessel functions (see \cite{Ismail-Masson}). \\In this paper we are
concerned with the little $q$-Bessel functions defined by \cite{KoS}
\begin{equation} J_{\alpha}(z;q)=z^{\alpha}\frac{(q^{\alpha+1};q)_{\infty}}{(q;q)_{\infty}} \ _{1}\phi_1\left(\left. \begin{matrix} 0\\
q^{\alpha+1}\end{matrix}\right\vert q,z\right) \label{11}.
\end{equation}
It will be convenient to use the slightly different functions
$j_{\alpha}(z;q^2),$ called normalized little $q$-Bessel functions,
which are defined by \cite{FHB}
\begin{equation}
j_{\alpha}(z;q)=\ _{1}\phi_1\left(\left. \begin{matrix} 0\\
q^{\alpha+1}\end{matrix}\right\vert q,z\right)\label{12}.
\end{equation} There are now several addition formulas available for little
$q$-Bessel functions. The first one is a $q$-analogue of Graf's
addition formula \cite{Sad}, established by R. F. Swarttouw by using
an analytic method. A special cases of this addition formula is a
$q$-analogue of Neumann's addition formula for little $q$-Bessel
functions $J_0$.\\
The second addition formula is also a $q$-extension of Graf's addition
formula. This formula has originally been discovered by Koelink
\cite{KQ} using the interpretation of the little $q$-Bessel
functions on the
quantum group of plane motions.\\
The third addition formula is a $q$-analogue of Gegenbauer's
addition formula derived by R. F. Swarttouw \cite{Sad}. The purpose
of this paper to point out that a new addition formula for the
little $q$-Bessel functions can be derived as a limiting case of an
addition formula for the little $q$-Jacobi polynomials established
by Floris and Koelink (see \cite{FlorisKo}, \cite{Floris}). Our
procedure is suggested by the observation that the normalized little
$q$-Bessel functions $j_{\alpha}(z;q)$ can be obtained as a
confluent limit of the little $q$-Jacobi polynomials.
\section{Notations and Preliminaries}
Let $ a , q \in \R $ such that $0<q<1$ and $n \in \Z_{+}. $ The
$q$-shifted factorial are defined by \cite{GR}
\begin{equation}(a ;
q)_{0}: =1 \ , \quad  (a ; q)_{n} := \ds\prod_{k=0}^{n-1}\left(1 -
aq^{k}\right).
\end{equation}
For negative subscripts the $q$-shifted factorial is defined by
\begin{equation}
(a;q)_{-n}:=\frac{1}{(aq^{-n};q)_n},\,\ n=0,1,2,...\ \label{13}.
\end{equation}
We also define
\begin{equation}
(a;q)_{\infty}:=\prod_{k=0}^{\infty}(1-aq^k).
\end{equation} Recall that, for $n,\ m\in \mathbb{Z}_+$ and $\alpha>0,$ we have \\
 \begin{equation}
 (q^{\alpha};q)_{\infty}\leq(q^{\alpha};q)_n\leq (-q^{\alpha};q)_{\infty},\label{inq1}
 \end{equation}
 \begin{equation}
 \mid(q^{-m};q)_n q^{nm}\mid \leq q^{\binom{n}{2}}\label{inq}.
 \end{equation}
The basic hypergeometric series are defined by
\begin{equation}
   \ _{r}\varphi_{s} \left(\left. \begin{matrix} \ a_{1}, a_{2}, ..., a_{r}\\
  \ b_1,b_{2},...,b_s\end{matrix}\right\vert q,z\right):=\ds\sum_{k=0}^{+\infty}\ds\frac{\left(a_{1}, a_{2},
. . . , a_{r} \ ; q \right)_{k}}{\left(b_{1}, b_{2}, . . . , b_{s} \
; q \right)_{k}\left(q;\ q \right)_{k}}
\left((-1)^{k}q^{\left(^{k}_{2}\right)}\right)^{1+s-r}z^{k}, \label{14}
\end{equation} where
\begin{equation*}\left(a_{1}, . . . , a_{r} \ ; q \right)_{n}:= \left(a_{1}; \
q\right)_{n} \ . \  . \ . \left(a_{r};\ q\right)_{n}.
 \end{equation*}In
particular, the $\ _{1}\varphi_{1}$ basic hypergeometric series
satisfies the relation
\begin{align}
(w;q)_{\infty} \ _{1}\phi_{\ 1}\left( \left. \begin{matrix} \ a\  \\
\ w\end{matrix}\right\vert q,z \right)=(z;q)_{\infty} \ _{1}\phi_{\ 1}\left( \left. \begin{matrix} \ az/w\  \\
\ z\end{matrix}\right\vert q,w\right). \label{symmetric}
\end{align}
\begin{proposition}
Let $n\in \mathbb{Z}$ and $a \in \mathbb{C}$. The basic
hypergeometric series $_{1}\phi_{1}$ satisfies the relation
\begin{equation} (q^{1-n};q)_{\infty} \
_{1}\phi_{1}\left(\begin{matrix} \ a \ \\ q^{1-n} \
\end{matrix};q,z\right)=
(-z)^nq^{\binom{n}{2}}(a;q)_n(q^{1+n};q)_{\infty} \
_{1}\phi_{1}\left(\begin{matrix} \ aq^n \ \\ q^{1+n} \
\end{matrix};q,q^nz \right) \label{15}.
\end{equation}
In addition,
\begin{equation*} |(q^{1-n};q)_{\infty} \
_{1}\phi_{1}\left(\begin{matrix} \ a \ \\ q^{1-n} \
\end{matrix};q,z\right)|\leq
|z|^nq^{\binom{n}{2}}(-q,-|a|,-|z|;q)_{\infty},\,\, n=0,1,2,...\, .
\end{equation*}
\end{proposition}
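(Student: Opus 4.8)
The plan is to establish \eqref{15} by a direct computation with the defining series of ${}_1\phi_1$, and then to read the bound off the resulting expansion. Throughout, an expression $(q^{1-n};q)_\infty\,{}_1\phi_1\!\left(\begin{smallmatrix} a\\ q^{1-n}\end{smallmatrix};q,z\right)$ with $n\ge 1$ — where both $(q^{1-n};q)_\infty$ and the tail of the series are individually indeterminate — is to be read as the entire function of $z$ obtained by merging the two factors, namely $\sum_{k\ge0}\frac{(a;q)_k}{(q;q)_k}(q^{1-n+k};q)_\infty(-1)^kq^{\binom k2}z^k$, via $(q^{1-n};q)_\infty/(q^{1-n};q)_k=(q^{1-n+k};q)_\infty$.

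For $n\ge 0$ I would argue as follows. In the series just written, the relation $(q^m;q)_\infty=0$ for every integer $m\le 0$ kills all terms with $0\le k\le n-1$, so the sum runs over $k\ge n$. Setting $k=n+j$ and inserting $(q^{1+j};q)_\infty=(q;q)_\infty/(q;q)_j$, $(a;q)_{n+j}=(a;q)_n(aq^n;q)_j$, $(q;q)_{n+j}=(q;q)_n(q^{n+1};q)_j$, $\binom{n+j}2=\binom n2+\binom j2+nj$, and $(q;q)_\infty/(q;q)_n=(q^{n+1};q)_\infty$, every factor depending only on $n$ comes out of the sum, and what remains is exactly $(-z)^nq^{\binom n2}(a;q)_n(q^{n+1};q)_\infty$ times $\sum_{j\ge0}\frac{(aq^n;q)_j}{(q^{n+1};q)_j(q;q)_j}(-1)^jq^{\binom j2}(q^nz)^j={}_1\phi_1\!\left(\begin{smallmatrix} aq^n\\ q^{n+1}\end{smallmatrix};q,q^nz\right)$. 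This is \eqref{15} for $n\ge 0$; the case $n=0$ is the trivial identity.

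For $n<0$ write $n=-m$ with $m\ge 1$ and apply the case already proved with $n$ replaced by $m$, $a$ by $aq^{-m}$, and $z$ by $q^{-m}z$. Rearranging the resulting equality and simplifying with the definition $(a;q)_{-m}=1/(aq^{-m};q)_m$ from \eqref{13} together with the elementary identity $\binom{-m}2=m^2-\binom m2$ (which makes the powers of $q$ match) turns it into \eqref{15} for the index $n=-m$; the finitely many values $a\in\{q,q^2,\dots,q^m\}$ are covered by continuity in $a$.

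Finally, the estimate is read off from the expansion above for $n=0,1,2,\dots$. Taking absolute values and writing $k=n+j$, the coefficient $\bigl|(a;q)_{n+j}\bigr|\,(q^{1+j};q)_\infty/(q;q)_{n+j}$ is at most $(-|a|;q)_\infty\cdot\frac{(q;q)_\infty}{(q;q)_j(q;q)_{n+j}}\le (-|a|;q)_\infty/(q;q)_j$, using $|(a;q)_\ell|\le(-|a|;q)_\ell\le(-|a|;q)_\infty$ and $(q;q)_\infty\le(q;q)_{n+j}$; combining this with $q^{\binom{n+j}2}=q^{\binom n2}q^{\binom j2}q^{nj}\le q^{\binom n2}q^{\binom j2}$ and Euler's identity $\sum_{j\ge0}q^{\binom j2}w^j/(q;q)_j=(-w;q)_\infty$ (with $w=|z|$) yields the bound $|z|^nq^{\binom n2}(-|a|;q)_\infty(-|z|;q)_\infty$, which implies the stated one since $(-q;q)_\infty\ge 1$. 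The only places that need care are the interpretation of the formally indeterminate products as genuine power series in $z$, and the bookkeeping of $q$-powers and signs in the case $n<0$; everything else is routine.
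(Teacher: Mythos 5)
Your proof of \eqref{15} is correct and follows essentially the same route as the paper: expand the merged product--series, observe that the first $n$ terms vanish, shift the summation index, and deduce the case $n<0$ from the case $n>0$ via the substitutions $a\mapsto aq^{-m}$, $z\mapsto q^{-m}z$ together with \eqref{13}. You additionally give a complete and correct proof of the stated bound (via $|(a;q)_k|\leq(-|a|;q)_\infty$, $(q;q)_\infty\leq(q;q)_{n+j}$ and Euler's identity), a step the paper asserts without proof.
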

\begin{proof}
Let $n \in \mathbb{N}.$ From \eqref{14}, we can write
\begin{equation}
(q^{1-n};q)_{\infty}\,_{1}\phi_{1}\left(\begin{matrix} \ a \ \\ q^{1-n} \
\end{matrix};q,z\right)=\sum_{k=0}^{\infty}\frac{(a;q)_k(q^{1-n+k};q)_{\infty}}{(q;q)_k}(-1)^kz^k
q^{\binom{k}{2}} \label{16}.
\end{equation}
The first $n$ terms of the series vanish, so the summation on the right-hand side of \eqref{16} starts with $k=n$, we obtain after replacing
$k$ by $k+n$
\begin{equation*}
(q^{1-n};q)_{\infty}\,_{1}\phi_{1}\left(\begin{matrix} \ a \ \\ q^{1-n} \
\end{matrix};q,z\right)=(-1)^nq^{\frac{1}{2}n(n-1)}z^n(a;q)_n(q^{1+n};q)_{\infty}\sum_{k=0}^{\infty}\frac{(aq^n;q)_k(-1)^kz^kq^{nk}
q^{\binom{k}{2}}}{(q^{1+n};q)_{k}(q;q)_k}.
\end{equation*}
Thus, we have shown the relation \eqref{15} for $n\geq 0$. The case $n<0$
follows from the case $n>0$  by changing $z$ by $zq^{-n}$ and $a$ by
$aq^{-n}$ in \eqref{15} and using relation \eqref{13}.
\end{proof}
\section{limit transition with $q$-orthogonal polynomials}
Bessel functions \eqref{1} are limit cases of some orthogonal
polynomials. A well known limit transition is the one between the
normalized Jacobi polynomials and the Bessel functions
$J_{\alpha}(z)$, which we can rewritten as
\begin{equation}
\lim_{n\rightarrow\infty}\,
P_n^{(\alpha,\beta)}\left(1-\frac{x^2}{2n^2}\right)=2^{\alpha}\Gamma
(\alpha +1)x^{-\alpha}J_{\alpha}(x), \label{31}
\end{equation}
where
\begin{equation}
P_n^{(\alpha,\beta)}(x)=\ _{2}F_1\left(\begin{matrix} -n,n+\alpha+\beta+1 \\
\alpha+1\end{matrix}\,;\frac{1-x}{2}\right)\label{32}.
\end{equation}
The $q$-analogue of limit transition \eqref{31} starts with the
little $q$-Jacobi polynomials, which are defined by
\begin{equation}
p_n(x;q^{\alpha},q^{\beta};q)=\ _{2}\phi_1\left(\begin{matrix} \
q^{-n}, \
q^{\alpha+\beta+n+1} \\
q^{\alpha+1}\end{matrix};q,x\right) \label{33},
\end{equation}
and which satisfy the orthogonality relation
\begin{align*}
\frac{(q^{\alpha+1},q^{\beta+1};q)_{\infty}}{(q^{\alpha+\beta+2},q;q)_{\infty}}\sum_{k=0}^{\infty}
p_m(q^k;q^{\alpha},q^{\beta};q)&p_n(q^k;q^{\alpha},q^{\beta};q)q^{k(\alpha+1)}\frac{(q^{k+1};q)_{\infty}}{(q^{\beta+k+1})_{\infty}}\\&=
\frac{q^{n(\alpha+1)}(1-q^{\alpha+\beta+1})(q^{\beta+1},q;q)_n}{(1-q^{\alpha+\beta+2n+1})(q^{\alpha+1},q^{\alpha+\beta+1};q)_{n}}\delta_{m,n},
\end{align*}
where $\alpha >-1$ and $\beta>-1,$ see \cite{Sheme}. The
$q$-analogue of the limit \eqref{31} from the little $q$-Jacobi
polynomials $p_n(x;a,b;q)$ to the normalized little $q$-Bessel
functions is given in the following proposition.
\begin{proposition} For $\alpha>-1$ and $\beta>-1,$ we have
\begin{equation}
\lim_{n \rightarrow
\infty}p_n(q^nx;q^{\alpha},q^{\beta};q)=j_{\alpha}(x;q),
\end{equation}
uniformly on compact subsets of $\mathbb{C}.$
\end{proposition}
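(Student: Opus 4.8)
The strategy is to compare the defining series of the two sides coefficient by coefficient and to justify the interchange of limit and summation by a Weierstrass majorant that is independent of $n$.

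First I would expand the left-hand side. By \eqref{33} and \eqref{14},
\begin{equation*}
p_n(q^nx;q^{\alpha},q^{\beta};q)=\sum_{k=0}^{\infty}\frac{(q^{-n};q)_k\,(q^{\alpha+\beta+n+1};q)_k}{(q^{\alpha+1};q)_k\,(q;q)_k}\,q^{nk}\,x^k ,
\end{equation*}
while \eqref{12} together with \eqref{14} and $(0;q)_k=1$ gives
\begin{equation*}
j_{\alpha}(x;q)=\sum_{k=0}^{\infty}\frac{(-1)^k q^{\binom{k}{2}}}{(q^{\alpha+1};q)_k\,(q;q)_k}\,x^k .
\end{equation*}
For the termwise limit, fix $k$. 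Since $q^{\alpha+\beta+n+1}\to 0$ we get $(q^{\alpha+\beta+n+1};q)_k\to 1$, and writing
\begin{equation*}
(q^{-n};q)_k\,q^{nk}=(-1)^k q^{\binom{k}{2}}\prod_{j=0}^{k-1}\bigl(1-q^{n-j}\bigr)
\end{equation*}
shows $(q^{-n};q)_k\,q^{nk}\to(-1)^k q^{\binom{k}{2}}$ as $n\to\infty$. Hence the $k$-th term of the first series tends to the $k$-th term of the second; because the limiting coefficients do not involve $x$, this holds uniformly on bounded subsets of $\mathbb{C}$.

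The one step that needs care is dominating the series uniformly in $n$. Fix $R>0$ and restrict to $n\ge 1$, so that $\alpha+\beta+n+1\ge\alpha+\beta+2>0$ (this is where $\alpha,\beta>-1$ enters) and therefore $|(q^{\alpha+\beta+n+1};q)_k|<1$. By \eqref{inq}, $|(q^{-n};q)_k\,q^{nk}|\le q^{\binom{k}{2}}$; and since $\alpha+1>0$, \eqref{inq1} gives $(q^{\alpha+1};q)_k\ge(q^{\alpha+1};q)_{\infty}$, while $(q;q)_k\ge(q;q)_{\infty}$. Thus, for $|x|\le R$ and every $n\ge 1$, the $k$-th term of the first series is bounded by
\begin{equation*}
M_k:=\frac{q^{\binom{k}{2}}R^k}{(q^{\alpha+1};q)_{\infty}\,(q;q)_{\infty}} ,
\end{equation*}
and $\sum_{k\ge 0}M_k<\infty$ since $q^{\binom{k}{2}}R^k$ decays super-geometrically (consecutive ratio $q^kR\to 0$). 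With this majorant in hand, a routine three-$\varepsilon$ argument — truncate at a large $K$, bound the two tails by $\sum_{k>K}M_k$ uniformly in $n$, and use the termwise convergence of the finite head — yields
\begin{equation*}
\lim_{n\to\infty}p_n(q^nx;q^{\alpha},q^{\beta};q)=j_{\alpha}(x;q)
\end{equation*}
uniformly on compact subsets of $\mathbb{C}$. I expect this uniform domination to be the only (minor) obstacle; the remainder is direct bookkeeping with $q$-shifted factorials.
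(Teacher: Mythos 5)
Your proof is correct and follows essentially the same route as the paper: expand $p_n(q^nx;q^{\alpha},q^{\beta};q)$ as a series, compute the termwise limit (using $(q^{-n};q)_kq^{nk}\to(-1)^kq^{\binom{k}{2}}$ and $(q^{\alpha+\beta+n+1};q)_k\to 1$), and conclude by dominated convergence with an $n$-independent majorant built from the inequality \eqref{inq}. The only differences are cosmetic — you bound $(q^{\alpha+\beta+n+1};q)_k$ by $1$ and the denominators by their infinite products, where the paper invokes \eqref{inq1} and recognizes the majorant series as a convergent $\ _1\phi_1$ — and your version in fact supplies the $(-1)^k$ that is missing from the paper's displayed termwise limit.
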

\begin{proof} From \eqref{33}, we can write
\begin{equation}
p_n(xq^n;q^{\alpha},q^{\beta};q)=\sum_{k=0}^{n}\frac{(q^{-n};q)_k(q^{\alpha+\beta
+n+1};q)_k}{(q^{\alpha+1},q;q)_k}x^kq^{kn}.
\end{equation}
Let $|x|\leq M$. By inequality \eqref{inq}, the summand of the
sum at the right hand side can be majorized by
\begin{equation*}
\frac{(-q^{\alpha+\beta+1};q)_{\infty}}{(q^{\alpha+1},q;q)_k} q^{\binom{k}{2}}M^k
\end{equation*}
and
\begin{equation}
\sum_{k=0}^{\infty}\frac{(-q^{\alpha+\beta+1};q)_{\infty}}{(q^{\alpha+1},q;q)_k} q^{\binom{k}{2}}M^k=(-q^{\alpha+\beta+1};q)_{\infty}
\ _{1}\phi_1\left( \left.\begin{matrix} 0\\
q^{\alpha+1}\end{matrix}\right\vert q,-M\right)<\infty \label{34}.
\end{equation}
On the other hand a simple calculation shows that
\begin{equation}
\lim_{n\rightarrow\infty}\frac{(q^{-n};q)_k(q^{\alpha+\beta
+n+1};q)_k}{(q^{\alpha+1},q;q)_k}x^kq^{kn}=\frac{q^{\binom{k}{2}}}{(q^{\alpha+1},q;q)_k}x^k \label{35}.
\end{equation}
Then the result follows from \eqref{34} and \eqref{35} by dominated
convergence.
\end{proof}
Another limit transition from orthogonal polynomials to the Bessel
functions, starts with the Krawtchouk polynomials \cite{KoS}. In
this work we consider the affine $q$-Krawtchouk polynomials. They are defined by \cite{Sheme}
\begin{equation}K_{z}\left(
q^{-x};t,N;q\right)= \ _{3}\varphi_{ 2}\left(  \left.
\begin{array}
[c]{ccc}%
q^{-z}, & q^{-x}, & 0\\
tq, & q^{-N} &
\end{array}
\right\vert q,q\right)
\end{equation}
where $0\leq z\leq N$ and $0<tq<1$. They satisfy the orthogonality
relation
\begin{align}
\sum_{x=0}^{N}\frac{(tq;q)_x(q;q)_N}{(q;q)_x(q;q)_{N-x}}(tq)^{-x}&K_{n}\left(
q^{-x};t,N;q\right)K_{m}\left(
q^{-x};t,N;q\right)\\&=(tq)^{n-N}\frac{(q;q)_n(q;q)_{N-n}}{(tq;q)_n(q;q)_{N}}\delta_{n,m}.
\end{align}
Next, we will establish a limit transition from affine
$q$-Krawtchouk polynomials to big $q$-Bessel functions, which are
defined by
\begin{align*}
\mathcal{J}_{\lambda}(x,a;q)=\ _{1}\phi_{1}\left( \left. \begin{matrix} \ x^{-1}\  \\
\ a\end{matrix}\right\vert q,-\lambda a x \right).
\end{align*}
\begin{proposition}Let $\ N, \ z,$ and $ l\in \mathbb{Z}_+,$ with $z+l\leq
N$ and $0<tq<1.$ Then
\begin{align*}
\lim_{m\rightarrow
\infty}(q^{-N-m};q^{-1})_{z+m}K_{z+m}(q^{x-N};t,N+m;q)=\mathcal{J}_{-q^{N-z+1}}(t^{-1}q^{x-N-1},tq;q).
\end{align*}
Furthermore
\begin{align*}
|K_{z}(q^{x-N};t,N;q)|\leq\frac{1}{(q;q)_{\infty}} \ _{1}\phi_{1}\left( \left. \begin{matrix} \ -tq^{N-\Re(x)+1}\  \\
\ tq\end{matrix}\right\vert q,-q^{\Re(x)-z+1} \right).
\end{align*}
\end{proposition}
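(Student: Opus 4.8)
Here is how I would approach the proof.

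The plan is to expand both sides as basic hypergeometric series and to pass to the limit term by term, the exchange of limit and summation being justified by dominated convergence; the dominating sequence will come from estimates of the kind recorded in the second assertion. First I would bring the right--hand side to series form: since $(t^{-1}q^{x-N-1})^{-1}=tq^{N-x+1}$ and $-(-q^{N-z+1})(tq)(t^{-1}q^{x-N-1})=q^{x-z+1}$,
\begin{equation*}
\mathcal{J}_{-q^{N-z+1}}(t^{-1}q^{x-N-1},tq;q)=\ _{1}\phi_{1}\left(\left.\begin{matrix} tq^{N-x+1}\\ tq\end{matrix}\right\vert q,q^{x-z+1}\right).
\end{equation*}
On the left, I insert the defining $_{3}\phi_{2}$ and use $(q^{-n};q)_{k}=(-1)^{k}q^{\binom{k}{2}-nk}(q;q)_{n}/(q;q)_{n-k}$ on the factors $(q^{-z-m};q)_{k}$ and $(q^{-N-m};q)_{k}$ to obtain
\begin{equation*}
K_{z+m}(q^{x-N};t,N+m;q)=\sum_{k=0}^{z+m}\frac{(q^{z+m-k+1};q)_{k}}{(q^{N+m-k+1};q)_{k}}\cdot\frac{(q^{x-N};q)_{k}}{(tq;q)_{k}(q;q)_{k}}\,q^{(N-z+1)k}.
\end{equation*}
Using in addition the reflection identity $(a;q^{-1})_{n}=(-a)^{n}q^{-\binom{n}{2}}(a^{-1};q)_{n}$ to put the prefactor $(q^{-N-m};q^{-1})_{z+m}$ in a form convenient for the limit and cancelling against the $(q^{-N-m};q)_{k}$ already present, the general term of $(q^{-N-m};q^{-1})_{z+m}K_{z+m}(q^{x-N};t,N+m;q)$ takes the shape $(q^{N-z+1};q)_{z+m}\cdot\frac{(q^{z+m-k+1};q)_{k}}{(q^{N+m-k+1};q)_{k}}\cdot\frac{(q^{x-N};q)_{k}}{(tq;q)_{k}(q;q)_{k}}\,q^{(N-z+1)k}$.

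Next I would pass to the limit. As $m\to\infty$ the factor $(q^{z+m-k+1};q)_{k}/(q^{N+m-k+1};q)_{k}$ tends to $1$ and $(q^{N-z+1};q)_{z+m}$ tends to $(q^{N-z+1};q)_{\infty}$, so for each fixed $k$ the general term converges to $(q^{N-z+1};q)_{\infty}\,(q^{x-N};q)_{k}\big((tq;q)_{k}(q;q)_{k}\big)^{-1}q^{(N-z+1)k}$. For the domination one notes $(q^{z+m-k+1};q)_{k}/(q^{N+m-k+1};q)_{k}\le 1$ and $(q^{N-z+1};q)_{z+m}\le 1$ (comparing factors one by one, using $z\le N$, which also gives $N-z+1\ge 1$), together with $|(q^{x-N};q)_{k}|\le (-q^{\Re(x)-N};q)_{\infty}$ and $(tq;q)_{k}(q;q)_{k}\ge (tq;q)_{\infty}(q;q)_{\infty}$; hence the general term is bounded, uniformly in $m$, by a constant multiple of $q^{(N-z+1)k}$, which is summable. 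Dominated convergence then yields
\begin{equation*}
\lim_{m\to\infty}(q^{-N-m};q^{-1})_{z+m}K_{z+m}(q^{x-N};t,N+m;q)=(q^{N-z+1};q)_{\infty}\ _{2}\phi_{1}\left(\left.\begin{matrix} q^{x-N},\,0\\ tq\end{matrix}\right\vert q,q^{N-z+1}\right).
\end{equation*}
To finish the first assertion I would apply the Heine--type transformation $(w;q)_{\infty}\ _{2}\phi_{1}\!\left(\left.\begin{matrix} a,\,0\\ c\end{matrix}\right\vert q,w\right)=\frac{(aw;q)_{\infty}}{(c;q)_{\infty}}\ _{1}\phi_{1}\!\left(\left.\begin{matrix} w\\ aw\end{matrix}\right\vert q,c\right)$ (the $b\to 0$ case of the first Heine transformation, \cite{GR}) with $a=q^{x-N}$, $c=tq$, $w=q^{N-z+1}$, and then the symmetry relation \eqref{symmetric}, to identify the right--hand side of the last display with $\ _{1}\phi_{1}\!\left(\left.\begin{matrix} tq^{N-x+1}\\ tq\end{matrix}\right\vert q,q^{x-z+1}\right)=\mathcal{J}_{-q^{N-z+1}}(t^{-1}q^{x-N-1},tq;q)$.

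For the estimate, the same rewriting gives $K_{z}(q^{x-N};t,N;q)=\sum_{k=0}^{z}\frac{(q^{z-k+1};q)_{k}}{(q^{N-k+1};q)_{k}}\,\frac{(q^{x-N};q)_{k}}{(tq;q)_{k}(q;q)_{k}}\,q^{(N-z+1)k}$, and the bounds $(q^{z-k+1};q)_{k}/(q^{N-k+1};q)_{k}\le 1$ and $|(q^{x-N};q)_{k}|\le (-q^{\Re(x)-N};q)_{k}$ give $|K_{z}(q^{x-N};t,N;q)|\le\ _{2}\phi_{1}\!\left(\left.\begin{matrix} -q^{\Re(x)-N},\,0\\ tq\end{matrix}\right\vert q,q^{N-z+1}\right)$. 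The same transformation together with \eqref{symmetric} turns the right--hand side into $\frac{1}{(q^{N-z+1};q)_{\infty}}\ _{1}\phi_{1}\!\left(\left.\begin{matrix} -tq^{N-\Re(x)+1}\\ tq\end{matrix}\right\vert q,-q^{\Re(x)-z+1}\right)$, and since $N-z+1\ge 1$ the product $(q^{N-z+1};q)_{\infty}$ has no more factors than $(q;q)_{\infty}$, each in $(0,1)$, so $(q^{N-z+1};q)_{\infty}\ge (q;q)_{\infty}$ and the claimed inequality follows. The step I expect to be the main obstacle is the first one: keeping track of which shifted--factorial factors in $(q^{-N-m};q^{-1})_{z+m}K_{z+m}(q^{x-N};t,N+m;q)$ carry the $m$--dependence, checking that what remains is majorised uniformly in $m$ by a summable sequence, and then recognising --- via the Heine transformation and \eqref{symmetric} --- that the limiting $_{2}\phi_{1}$ is precisely the $_{1}\phi_{1}$ defining the big $q$-Bessel function $\mathcal{J}$.
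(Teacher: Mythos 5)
Your argument is correct and reaches the intended conclusions, but it takes a genuinely different route from the paper's. The paper first converts the defining $_{3}\phi_{2}$ into a terminating $_{2}\phi_{1}$ with upper parameters $q^{-z-m}$, $tq^{N-x+1}$ and argument $q^{x+m+1}$ (via the quoted Gasper--Rahman transformation), so that all the $m$-dependence sits in $(q^{-z-m};q)_nq^{nm}$; its limit $(-1)^nq^{\binom{n}{2}}q^{-nz}$, controlled by \eqref{inq}, turns the series term by term into exactly the $_{1}\phi_{1}$ defining $\mathcal{J}_{-q^{N-z+1}}(t^{-1}q^{x-N-1},tq;q)$, with no further transformation needed, and the bound on $|K_z|$ falls out of the same majorization together with $(q^{N};q^{-1})_z\geq (q;q)_{\infty}$. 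You instead reverse the finite $q$-shifted factorials elementarily inside the $_{3}\phi_{2}$, take the limit in a different representation --- landing on $(q^{N-z+1};q)_{\infty}\,{}_{2}\phi_{1}(q^{x-N},0;tq;q,q^{N-z+1})$ --- and then need the $b\to 0$ Heine transformation plus \eqref{symmetric} to recognize the answer. I checked your termwise limits, your dominating sequence (the inequalities $(q^{z+m-k+1};q)_k/(q^{N+m-k+1};q)_k\leq 1$ and $(q^{N-z+1};q)_{z+m}\leq 1$ do hold because $z\leq N$ keeps every factor in $(0,1)$), the Heine/\eqref{symmetric} identification, and the final estimate (where one should note that the $_{1}\phi_{1}$ in the bound has all terms positive, which is what legitimizes replacing $(q^{N-z+1};q)_{\infty}$ by $(q;q)_{\infty}$); all are sound. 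What your approach buys is independence from the particular $_{3}\phi_{2}\to{}_{2}\phi_{1}$ transformation, at the cost of an extra transformation at the end; the paper's choice of representation is tailored so that the limit lands directly on the target.

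One point you should make explicit rather than glide over: as printed, the prefactor $(q^{-N-m};q^{-1})_{z+m}$ diverges as $m\to\infty$, since the reflection identity gives $(q^{-N-m};q^{-1})_{z+m}=(-q^{-N-m})^{z+m}q^{-\binom{z+m}{2}}(q^{N+m};q)_{z+m}$, which is \emph{not} equal to $(q^{N-z+1};q)_{z+m}$ and cannot be cancelled against the $k$ factors of $(q^{-N-m};q)_k$. The stated limit can only hold with the prefactor $(q^{N+m};q^{-1})_{z+m}=(q^{N-z+1};q)_{z+m}$, which is evidently what is intended (the paper's own proof makes the same silent substitution). Your ``reflection and cancellation'' step effectively performs this correction, so the resulting computation is the right one, but it should be presented as a correction of the statement, not as a consequence of the reflection identity.
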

\begin{proof}
From the following identity (see \cite{GR})
\[ \ _{3}\varphi_{ 2}\left(  \left.
\begin{array}
[c]{ccc}%
q^{-n}, & c/b, & 0\\
c, & cq/bz &
\end{array}
\right\vert q,q\right)
=\frac{(-1)^nq^{\binom{n}{2}}}{(cq/bz;q)_n}(cq/bz)^n \
_{2}\varphi_{1}\left( \left.
\begin{array}
[c]{ccc}%
q^{-n},& b\\
\,\,\,\, c&
\end{array}
\right\vert q,z\right),
\]
we can rewrite the affine Krawtchouk polynomials as
$$K_{z}\left(  q^{x-N};t,N;q\right)=\frac{1}{(q^{N};q^{-1})_z}\ _{2}\varphi_{1}\left( \left.
\begin{array}
[c]{ccc}%
q^{-z}, & tq^{N-x+1}\\
\, \, \ tq &
\end{array}
\right\vert q,q^{x+1}\right).$$ Thus
\begin{align*}
(q^{-N-m};q^{-1})_{z+m}K_{z+m}(q^{x-N};t,N+m;q)=
\sum_{n=0}^{z+m}\frac{(q^{-z-m},q^{N-x+1};q)_n}{(tq,q;q)_n}q^{n(x+m+1)}.
\end{align*}
Now, by \eqref{inq}, we see that
\begin{equation}
|\frac{(q^{-z-l},tq^{N-x+1};q)_n}{(tq,q;q)_n}q^{n(x+l+1)}|\leq
\frac{(-tq^{N-\Re(x)+1};q)_n}{(tq,q;q)_n}q^{\binom{n}{2}}q^{n(\Re(x)-z+1)}.
\end{equation}
On the other hand
\begin{equation}
\lim_{m\rightarrow\infty}\frac{(q^{-z-m},tq^{N-x+1};q)_n}{(tq,q;q)_n}q^{n(x+m+1)}=\frac{(tq^{N-x+1};q)_n}{(tq,q;q)_n}(-1)^nq^{\binom{n}{2}}q^{n(x-z+1)}.
\end{equation}
So, the proposition follows by dominated convergence.
\end{proof}
\section{addition formula for the little $q$-Bessel functions}
Let $x \in \mathbb{C},$ $t\in (0,q^{-1}),$ $\nu >0$ and $l,$ $m,$
$z,$ and $N \in \mathbb{Z}_+ $ with $z+l\leq N.$\\We consider
\begin{align}
r_{l,m}^{(\nu)}(x;q)=\left\{\begin{matrix}(xq;q)_{l-m}^{1/2} \, \,p_{m}(x,q^{\nu},q^{l-m};q)\, \, \, l\geq m\\ \, \, \\
\, \, (x;q^{-1})_{m-l}^{1/2}\, \,
p_{l}(xq^{l-m},q^{\nu},q^{m-l};q)\, \,l\leq m.\end{matrix}\right .
\end{align}
where $p_{m}$ are the little $q$-Jacobi polynomials. P. G. A. Floris
and H. T. Koelink obtained the following addition formula for the
polynomials $r_{l,m}^{(\nu)}(x;q)$
\begin{align}
(-1)^{l-m}\sqrt{(tq)^{m-l}\frac{(xq^{m-l+1};q)_{l-m}}{(q^{N+m-l+1};q)_{l-m}}}r_{l,m}^{(\nu)}(xq^{m-l};q)\widehat{K}_{z}(xq^{-N};t,N+m-l;q)
\label{product}
\end{align}
\begin{align*}&=\sum_{r=0}^l\sum_{s=0}^mc_{l,m;r,s}^{(\nu)}(q)
(-1)^{r-s}t^{1/2(r+s)}q^{1/2(r-s)}q^{z(r+s)}r_{l-r,m-s}^{(\nu+r+s)}(q^z;q)r_{l-r,m-s}^{(\nu+r+s)}(tq^z;q)\\&
\times
r_{r,s}^{(\nu-1)}(q^{N+m-l-z};q)\widehat{K}_{z+l-m+s-r}(xq^{-N};t,N;q),
\end{align*}
where
\begin{equation}
c_{l,l,r,s}^{(\nu)}(q)=\frac{1-q^{\nu+r+s+1}}{1-q^{\nu+1}}\frac{c_{l,l}^{(\nu)}(q)}{c_{l-r,l-s}^{(\nu+r+s)}(q)c_{r,s}^{(\nu-1)}(q)}
\label{cc1},
\end{equation}
\begin{equation}
c^{(\nu)}_{l,m}(q)=\frac{q^{m(\nu+1)}(1-q^{\nu+1})}
{1-q^{\nu+l+m+1}}\frac{(q;q)_l(q;q)_m}{(q^{\nu+1};q)_l(q^{\nu+1};q)_m}\label{cc2},
\end{equation}
and
\begin{equation}
\widehat{K}_{z}\left(  q^{x-N};t,N;q\right)  =\left(  -1\right)
^{z}\left( tq\right) ^{-z/2}\left(  \frac{\left( q^{N};q^{-1}\right)
_{z}\left( tq;q\right)  _{z}  }{\left(  q;q\right)
_{z}}\right)  ^{1/2}K_{z}\left(  q^{x-N};t,N;q\right).\end{equation}
Let $l,m, k \in\mathbb{Z},$ with $m\leq l$, then from the formula
\eqref{product} we obtain
\begin{align}
&p_{l}(x;q^{\nu},1;q)\widehat{K}_{z}(xq^{-N};t,N;q) \label{add}
\end{align}
\begin{align*}&=\sum_{r=0}^{l}\sum_{s=0}^{r}(1-\delta_{rs})c_{l,l,r,s}^{(\nu)}(q)
(-1)^{r-s}t^{1/2(r+s)}q^{1/2(r-s)}q^{z(r+s)}(q^z,tq^z;q^{-1})_{r-s}^{1/2}\\&
 \times
(q^{N-z+1};q)_{r-s}^{1/2}
p_{l-r}(q^{z+s-r},q^{\nu+r+s},q^{r-s};q)p_{l-r}(tq^{z+s-r},q^{\nu+r+s},q^{r-s};q)\\&
\times
p_{s}(q^{N-z},q^{\nu-1},q^{r-s};q)\widehat{K}_{z+s-r}(xq^{-N};t,N;q)\\&+\sum_{r=0}^{l}\sum_{s=0}^{r}c_{l,l,r,s}^{(\nu)}(q)
(-1)^{r-s}t^{1/2(r+s)}q^{1/2(r-s)}q^{r^2-s^2}q^{z(r+s)}(q^{z+1},tq^{z+1};q)_{r-s}^{1/2}\\&\times
(q^{N-z};q^{-1})_{r-s}^{1/2}
p_{l-r}(q^{z},q^{\nu+r+s},q^{r-s};q)p_{l-r}(tq^{z},q^{\nu+r+s},q^{r-s};q)\\&\times
p_{s}(q^{N-z+s-r},q^{\nu-1},q^{r-s};q)\widehat{K}_{z+r-s}(xq^{-N};t,N;q).
\end{align*}
The addition formula for the little $q$-Bessel functions can be
obtained formally as a limit case of the formula \eqref{add}.
\begin{theorem}
Let $x \in \mathbb{C},$ $t\in (0,q^{-1}),$ $\nu >0$ and $z,\,N \in
\mathbf{N}.$ Then the series $_{1}\phi_{1}$ satisfies the following
addition formula
\begin{equation}
\ _{1}\phi_1\left(\begin{matrix} \ 0\  \\
\ q^{\nu+1}\end{matrix};q,q^{x-l}\right)\ _{1}\phi_1\left(\begin{matrix} \ tq^{N-x+1}\  \\
\
tq\end{matrix};q,q^{x-z+1}\right) \label{41}
\end{equation}
\begin{align*}&=\sum_{r=1}^{\infty}\sum_{s=0}^{r}(1-\delta_{rs})q^{(r+s)(z+s-l+1)}t^r
\frac{(q^{N-z+1};q)_{r-s}^{1/2}(q^N;q^{-1})_z^{1/2}}{(q^N;q^{-1})_{z+s-r}^{1/2}}\frac{(q^{\nu};q)_r
(q^{\nu};q)_s}{(q;q)_r (q;q)_s}\\&
 \times
\ _{1}\phi_1\left(\begin{matrix} \ 0\  \\
\ q^{\nu+r+s+1}\end{matrix};q,q^{z+s-l}\right)\
_{1}\phi_1\left(\begin{matrix} \ 0 \   \\
q^{\nu+r+s+1}\end{matrix};tq^{z+s-l}\right)\\& \times
p_{s}(q^{N-z},q^{\nu-1},q^{r-s};q)\ _{1}\phi_1\left(\begin{matrix} \
tq^{N-x+1}\   \\
tq\end{matrix};q^{x-z-s+r+1}\right)\\&+\sum_{r=1}^{\infty}\sum_{s=0}^{r}q^{(s+r)(z+r-l)}(tq)^s
\frac{(q^{N-z};q^{-1})_{r-s}^{1/2}(q^N;q^{-1})_z^{1/2}}{(q^N;q^{-1})_{z+s-r}^{1/2}}\frac{(q^{\nu};q)_r
(q^{\nu};q)_s}{(q;q)_r (q;q)_s}
\\&\times
\ _{1}\phi_1\left(\begin{matrix} \ 0\  \\
\ q^{\nu+r+s+1}\end{matrix};q,q^{z+r-l}\right)\
_{1}\phi_1\left(\begin{matrix} \ 0 \   \\
q^{\nu+r+s+1}\end{matrix};tq^{z+r-l}\right)\\& \times
p_{s}(q^{N-z+s-r},q^{\nu-1},q^{r-s};q)\
_{1}\phi_1\left(\begin{matrix} \
tq^{N-x+1} \   \\
tq\end{matrix};q^{x-z-r+s+1}\right).
\end{align*}
\end{theorem}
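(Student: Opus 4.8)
The plan is to obtain the addition formula \eqref{41} as a confluent limit of the little $q$-Jacobi addition formula \eqref{add}, in exactly the spirit of the limit transition from little $q$-Jacobi polynomials to normalized little $q$-Bessel functions established in Section~3. In \eqref{add} I would make the substitution $l\mapsto l+m$, $z\mapsto z+m$, $N\mapsto N+m$ and replace the (multiplicative) variable by $q^{m}$ times a fixed $q^{x}$, and then let $m\to\infty$. The effect of this scaling is that the argument of every ``long'' little $q$-Jacobi polynomial on either side takes the shape $q^{(\mathrm{degree})}\cdot(\text{a fixed power of }q)$, for instance $p_{l+m-r}\bigl(q^{l+m-r}q^{z+s-l};q^{\nu+r+s},q^{r-s};q\bigr)$, so the little $q$-Jacobi to little $q$-Bessel limit applies to each of them and yields the factors ${}_{1}\phi_{1}(0;q^{\nu+r+s+1};q,q^{z+s-l})$ and the like; the argument of every affine $q$-Krawtchouk polynomial becomes the fixed point $q^{x-N}$, since $\widehat K_{z+m+s-r}(q^{x+m}q^{-N-m};t,N+m;q)=\widehat K_{z+m+s-r}(q^{x-N};t,N+m;q)$, so the affine $q$-Krawtchouk to big $q$-Bessel limit of Section~3 applies; and the ``short'' polynomials $p_{s}(q^{N-z},q^{\nu-1},q^{r-s};q)$ and $p_{s}(q^{N-z+s-r},q^{\nu-1},q^{r-s};q)$ keep a fixed degree and are untouched. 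From the definition of $\widehat K$ together with that second limit transition one extracts an explicit normalizing constant $B_{m}(j)$ such that $B_{m}(j)^{-1}\,\widehat K_{j+m}(q^{x-N};t,N+m;q)\to{}_{1}\phi_{1}\bigl(tq^{N-x+1};tq;q,q^{x-j+1}\bigr)$; dividing the whole of \eqref{add} through by $B_{m}(z)$ turns its left-hand side into ${}_{1}\phi_{1}(0;q^{\nu+1};q,q^{x-l})\,{}_{1}\phi_{1}(tq^{N-x+1};tq;q,q^{x-z+1})$.

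Next I would pass to the limit on the rescaled right-hand side term by term. The convergence of the special-function factors is routine: the little $q$-Jacobi factors converge by the limit transition just recalled, and each $\widehat K_{z+m+s-r}$, after being multiplied by $B_{m}(z+s-r)/B_{m}(z)$, converges to ${}_{1}\phi_{1}(tq^{N-x+1};tq;q,q^{x-z-s+r+1})$ (and symmetrically in the second sum, with $\widehat K_{z+m+r-s}$). The real work is the coefficient: one expands $c^{(\nu)}_{l+m,l+m,r,s}(q)$ via \eqref{cc1} and \eqref{cc2}, multiplies it by the explicit factors $q^{(r-s)/2}$, $q^{z(r+s)}$ (and $q^{r^{2}-s^{2}}$ in the second sum), the half-powers of $t$, the factors $(q^{z+m},tq^{z+m};q^{-1})_{r-s}^{1/2}$, $(q^{z+1},tq^{z+1};q)_{r-s}^{1/2}$, $(q^{N-z+1};q)_{r-s}^{1/2}$, $(q^{N-z};q^{-1})_{r-s}^{1/2}$, and the ratio $B_{m}(z\pm(r-s))/B_{m}(z)$, and checks that all $m$-dependence cancels, leaving precisely the coefficient displayed in \eqref{41}. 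This uses $(q^{z+m};q^{-1})_{k}\to1$ as $m\to\infty$, the convergence of the various $(q;q)$- and $(q^{\nu+1};q)$-quotients whose indices differ by a bounded amount, and the exact identities $(q^{N+m};q^{-1})_{z+m}/(q^{N+m};q^{-1})_{z+s-r+m}=(q^{N-z+1};q)_{r-s}$ and $(q^{N};q^{-1})_{z}/(q^{N};q^{-1})_{z+s-r}=(q^{N-z+1};q)_{r-s}$ --- the latter being precisely how the combination $(q^{N-z+1};q)_{r-s}^{1/2}(q^{N};q^{-1})_{z}^{1/2}/(q^{N};q^{-1})_{z+s-r}^{1/2}$ in \eqref{41} is produced. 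One also has to watch the $r=0$ term of the second sum of \eqref{add}, which equals $p_{l}(q^{z};q^{\nu},1;q)\,p_{l}(tq^{z};q^{\nu},1;q)\,\widehat K_{z}(xq^{-N};t,N;q)$: in the limit it produces ${}_{1}\phi_{1}(0;q^{\nu+1};q,q^{z-l})\,{}_{1}\phi_{1}(0;q^{\nu+1};q,tq^{z-l})\,{}_{1}\phi_{1}(tq^{N-x+1};tq;q,q^{x-z+1})$, which has to be carried over to the left-hand side so that the two double sums in \eqref{41} may be indexed from $r=1$ (here the ${}_{1}\phi_{1}$-identities \eqref{symmetric} and \eqref{15} are the natural tools).

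Finally, since the double sum in \eqref{add} runs over $0\le s\le r\le l+m$ while that in \eqref{41} is infinite, the interchange of $\lim_{m\to\infty}$ with the summation has to be justified, which I would do by dominated convergence. A majorant for the general term of the rescaled right-hand side, uniform in $m$, is built from: the bound for $|(q^{1-n};q)_{\infty}\,{}_{1}\phi_{1}(\,\cdot\,;q^{1-n};q,\,\cdot\,)|$ of Section~2 (equivalently the majorant already used in \eqref{34}) applied to the little $q$-Bessel-type factors; the estimate for $|K_{z}(q^{x-N};t,N;q)|$ from the second Proposition of Section~3 applied to the affine $q$-Krawtchouk factors; the estimate of $p_{s}$ coming from \eqref{inq1}--\eqref{inq}; and a geometric factor $q^{c(r+s)}$ with $c>0$ left over from the surviving $q$-power prefactor. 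With such a bound in hand the limit can be taken termwise, giving \eqref{41}. I expect the main obstacle to be exactly this bookkeeping: simultaneously (i) showing that $c^{(\nu)}_{l+m,l+m,r,s}(q)$, the normalizations $B_{m}(\cdot)$ and the accumulated powers of $q$ and $t$ collapse to the stated coefficients (with the $r=0$ term correctly accounted for), and (ii) producing a summable, $m$-independent bound for the general term; the convergence of the individual hypergeometric factors and the final limit--sum interchange are then immediate consequences of the two limit transitions of Section~3.
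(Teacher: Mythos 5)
Your proposal follows essentially the same route as the paper's own proof: the same substitution $l\mapsto l+m$, $z\mapsto z+m$, $x\mapsto q^{x+m}$, $N\mapsto N+m$ in the little $q$-Jacobi addition formula \eqref{add}, followed by the limit $m\to\infty$ justified by dominated convergence using the asymptotic behaviour of $c^{(\nu)}_{l+m,l+m,r,s}(q)$ together with Proposition 3.2 and Lemma 4.2. The only difference is that you make explicit some bookkeeping (the normalization of $\widehat{K}$ and the treatment of the $r=0$ term) that the paper leaves implicit.
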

In order to justify the limit transition from the little $q$-Jacobi
to the little $q$-Bessel functions we first prove the following
lemma.
\begin{lemma} For  $\nu >0$ and $N,\,z,\,r,\,s \in
\mathbf{N},$ we have
\begin{equation*}
|p_{s}(q^{N-z},q^{\nu-1},q^{r-s};q)|\leq
\frac{(-q,-q^{\nu},-q^{z-N};q)_{\infty}}{(q,q^{\nu},q^{\nu};q)_{\infty}}q^{s(N-z)}q^{-s(s-1)/2}.
\end{equation*}
\end{lemma}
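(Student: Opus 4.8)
The plan is to expand $p_{s}(q^{N-z},q^{\nu-1},q^{r-s};q)$ as a terminating basic hypergeometric sum, majorize it term by term with \eqref{inq} and \eqref{inq1}, sum the majorant by Euler's identity $(-x;q)_{\infty}=\sum_{k\ge0}q^{\binom{k}{2}}x^{k}/(q;q)_{k}$, and then rearrange the $q$-shifted factorials so as to bring out the two factors $q^{s(N-z)}$ and $q^{-s(s-1)/2}$.

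By \eqref{33} with $\alpha=\nu-1$, $\beta=r-s$ one has $p_{s}(q^{N-z},q^{\nu-1},q^{r-s};q)={}_{2}\phi_{1}\!\left(\begin{matrix}q^{-s},q^{\nu+r}\\q^{\nu}\end{matrix};q,q^{N-z}\right)=\sum_{k=0}^{s}\frac{(q^{-s};q)_{k}(q^{\nu+r};q)_{k}}{(q^{\nu};q)_{k}(q;q)_{k}}q^{(N-z)k}$, a polynomial of degree $s$ in $q^{N-z}$. For the $k$-th summand $T_{k}$ I would use $|(q^{-s};q)_{k}|\le q^{\binom{k}{2}-sk}$ from \eqref{inq}, and $(q^{\nu+r};q)_{k}\le(-q^{\nu+r};q)_{\infty}\le(-q^{\nu};q)_{\infty}$, $(q^{\nu};q)_{k}\ge(q^{\nu};q)_{\infty}$, $(q;q)_{k}\ge(q;q)_{\infty}$ from \eqref{inq1}, giving $|T_{k}|\le\frac{(-q^{\nu};q)_{\infty}}{(q;q)_{\infty}(q^{\nu};q)_{\infty}}q^{\binom{k}{2}+(N-z-s)k}$. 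Summing over $0\le k\le s$, enlarging the range to $k=\infty$ (the terms are positive, $q^{N-z-s}>0$), and using $0<(q;q)_{k}\le1$ inside Euler's identity, one gets the intermediate bound $|p_{s}(q^{N-z},q^{\nu-1},q^{r-s};q)|\le\frac{(-q^{\nu};q)_{\infty}}{(q;q)_{\infty}(q^{\nu};q)_{\infty}}(-q^{N-z-s};q)_{\infty}$.

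The remaining work is the rearrangement of $(-q^{N-z-s};q)_{\infty}$, which is where an arithmetic constraint is actually needed. Splitting the infinite product at the $s$-th factor, $(-q^{N-z-s};q)_{\infty}=(-q^{N-z-s};q)_{s}(-q^{N-z};q)_{\infty}$, and pulling $q^{N-z-s+j}$ out of the $j$-th factor of the finite piece, $(-q^{N-z-s};q)_{s}=q^{s(N-z)-\binom{s+1}{2}}(-q^{z-N+1};q)_{s}=q^{s(N-z)-s(s-1)/2}\,q^{-s}(-q^{z-N+1};q)_{s}$, using $\binom{s+1}{2}=\binom{s}{2}+s$. The leftover factor $q^{-s}(-q^{z-N+1};q)_{s}$ must be absorbed: here I would use the running hypotheses of this section — $z+l\le N$ and $s\le r\le l$, whence $N\ge z+s$ — so that $q^{z-N}\ge q^{-s}$ and hence $(-q^{z-N};q)_{\infty}=(1+q^{z-N})(-q^{z-N+1};q)_{\infty}\ge q^{-s}(-q^{z-N+1};q)_{s}$; also $N-z\ge s\ge1$ gives $(-q^{N-z};q)_{\infty}\le(-q;q)_{\infty}$. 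Putting the pieces together bounds $|p_{s}(q^{N-z},q^{\nu-1},q^{r-s};q)|$ by $\frac{(-q,-q^{\nu},-q^{z-N};q)_{\infty}}{(q;q)_{\infty}(q^{\nu};q)_{\infty}}q^{s(N-z)-s(s-1)/2}$, which is no larger than the asserted bound because $(q^{\nu};q)_{\infty}^{-1}\ge1$ (the case $s=0$ being trivial since $p_{0}\equiv1$). I expect this final rearrangement to be the only delicate point: one must keep exact track of the exponent $s(N-z)-s(s-1)/2$ and notice that the parasitic power $q^{-s}$ is compensated precisely by $N\ge z+s$; all the earlier estimates are routine applications of \eqref{inq}, \eqref{inq1} and Euler's identity.
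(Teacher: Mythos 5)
Your proposal diverges from the paper's proof at the very first step, and the divergence is what creates the gap. The paper does \emph{not} estimate the ${}_2\phi_1$ sum term by term in its given form: it first reverses the summation index ($k\mapsto s-k$), which pulls the factor $(-q^{N-z})^{s}q^{-s(s-1)/2}(q;q)_s/(q^{\nu+r-s};q)_s$ out of the sum as an \emph{exact identity}; the residual sum is then a sum in powers of $q^{z-N}$ whose terms are majorized via \eqref{inq1} and summed by the $q$-binomial theorem to $(-q^{\nu},-q^{z-N};q)_{\infty}/(q,q^{\nu};q)_{\infty}$ — a bound uniform in $s$ and requiring no relation between $N-z$ and $s$. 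Your route cannot reproduce this: the unreversed sum has its largest term at $k=s$, of size comparable to $q^{s(N-z)}q^{-s(s+1)/2}$, so any term-by-term majorization (and your intermediate bound $(-q^{N-z-s};q)_{\infty}$ is exactly of that order) is necessarily a factor $q^{-s}$ larger than the target $q^{s(N-z)}q^{-s(s-1)/2}$. That parasitic $q^{-s}$ is not an artifact of sloppy estimation; it can only be removed by extracting the top term exactly, which is what the reversal accomplishes.

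Your attempted repair — absorbing $q^{-s}(-q^{z-N+1};q)_s$ into $(-q^{z-N};q)_{\infty}$ — requires $N-z\ge s$, and this is a genuine gap. First, the lemma is stated for arbitrary $N,z,r,s\in\mathbf{N}$ with no such constraint. Second, and more damagingly, the constraint is unavailable precisely where the lemma is used: in the proof of Theorem 4.1 the bound must dominate the summands of $\sum_{r=1}^{\infty}\sum_{s=0}^{r}$ with $N$ and $z$ fixed, so $s$ ranges over all of $\mathbb{N}$ and eventually exceeds $N-z$; the whole point of the $q^{-s(s-1)/2}$ factor is to control that large-$s$ regime. (The inequality $(-q^{N-z};q)_{\infty}\le(-q;q)_{\infty}$ you use also needs $N>z$.) So what you prove is a strictly weaker statement that does not suffice for the dominated-convergence argument. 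The first half of your argument (the ${}_2\phi_1$ expansion, the use of \eqref{inq} and \eqref{inq1}, and Euler's identity) is fine; to fix the proof, reverse the order of summation before estimating, as the paper does.
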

\begin{proof} In the sum defining $p_{s}(q^{N-z},q^{\nu-1},q^{r-s};q)$
replace the summation index $k$ by $s-k$ and apply elementary
relations for the $q$-shifted factorial (see \cite{GR}) to obtain
\begin{align*}
p_{s}(q^{N-z},q^{\nu-1},q^{r-s};q)=(-q^{N-z})^sq^{-s(s-1)/2}\frac{(q;q)_s}{(q^{\nu+r-s};q)_s}
\sum_{k=0}^{s}\frac{(q^{\nu+r-s};q)_{2s-k}(-q^{N-z})^{-k}q^{k(k-1)/2}}{(q;q)_{s-k}(q^{\nu};q)_{s-k}(q;q)_k}.
\end{align*}
By \eqref{inq1} and $q$-binomial theorem (see \cite{GR}), we get
\begin{align*}
\mid\sum_{k=0}^{s}\frac{(q^{\nu+r-s};q)_{2s-k}(-q^{N-z})^{-k}q^{k(k-1)/2}}{(q;q)_{s-k}(q^{\nu};q)_{s-k}(q;q)_k}\mid
& \leq
\frac{(-q^{\nu};q)_{\infty}}{(q,q^{\nu};q)_{\infty}}\sum_{k=0}^{s}\frac{q^{-k(N-z)}q^{k(k-1)/2}}{(q;q)_k}
\\& \leq
\frac{(-q^{\nu},-q^{z-N};q)_{\infty}}{(q,q^{\nu};q)_{\infty}}.
\end{align*}
Hence
\begin{equation*}
|p_{s}(q^{N-z},q^{\nu-1},q^{r-s};q)|\leq
\frac{(-q,-q^{\nu},-q^{z-N};q)_{\infty}}{(q,q^{\nu},q^{\nu};q)_{\infty}}q^{s(N-z)}q^{-s(s-1)/2}.
\end{equation*}
\end{proof}
\begin{proof}($\mathrm{Theorem \,4.1}$)
In the addition formula for the little $q$-Jacobi polynomials $p_n$,
replace $l$ by $l+m,$ $z$ by $z+m,$ $x$ by $q^{x+m}$ and $N$ by
$N+m$, we get addition formula for $p_{l+m}(q^{x+m};q^{\nu},1;q)$
\begin{align*}
&p_{l+m}(q^{x+m};q^{\nu},1;q)\widehat{K}_{z+m}(q^{x-N};t,N+m;q)\\&=\sum_{r=0}^{l+m}\sum_{s=0}^{r}(1-\delta_{rs})c_{l+m,l+m,r,s}^{(\nu)}(q)
(-1)^{r-s}t^{1/2(r+s)}q^{1/2(r-s)}q^{z(r+s)}\\&
 \times
(q^{z+m},tq^{z+m};q^{-1})_{r-s}^{1/2}
(q^{N-z+1};q)_{r-s}^{1/2}
p_{l+m-r}(q^{z+m+s-r},q^{\nu+r+s},q^{r-s};q)\\&
 \times
p_{l+m-r}(tq^{z+s+m-r},q^{\nu+r+s},q^{r-s};q)
p_{s}(q^{N-z},q^{\nu-1},q^{r-s};q)\widehat{K}_{z+m+s-r}(q^{x-N};t,N+m;q)\\&+\sum_{r=0}^{l+m}\sum_{s=0}^{r}c_{l+m,l+m,r,s}^{(\nu)}(q)
(-1)^{r-s}t^{1/2(r+s)}q^{1/2(r-s)}q^{r^2-s^2}q^{(z+m)(r+s)}\\&
\times(q^{z+m+1},tq^{z+m+1};q)_{r-s}^{1/2}
(q^{N-z};q^{-1})_{r-s}^{1/2}
p_{l+m-r}(q^{z+m},q^{\nu+r+s},q^{r-s};q) \\&
\times p_{l+m-r}(tq^{z+m},q^{\nu+r+s},q^{r-s};q)
p_{s}(q^{N-z+s-r},q^{\nu-1},q^{r-s};q)\widehat{K}_{z+m+r-s}(q^{x-N};t,N+m;q).
\end{align*}
As $m\rightarrow \infty$, the right-hand side and left-hand sides of
these formula tends to the right-hand side and left-hand side of the
addition formula for $q$-Bessel \eqref{41}. We give a rigorous
proof of these limits by using the dominated
convergence theorem.\\
From \eqref{cc1} and \eqref{cc2} there exists a positive constant $K$
such that
\begin{equation}
c_{l+m,l+m,r,s}^{(\nu)}(q)\sim
Kq^{-(l+m)(r+s)+s(s+r+1)}\, \, \, as\,\,\,
m\rightarrow\infty.
\end{equation}
Then by the use of proposition 3.2 and lemma 4-2, the summand in the
first sum of the second hand side of next formula is bounded in
absolute value by $Cq^{-l(r+s)+1/2s^2}$, where $C$ is positive
constant and independent of $s,r$. Similarly, the summand in the
second sum at the second hand side of next formula is bounded in
absolute value by $Cq^{-l(r+s)+1/2r^2}.$
\end{proof}
In the following corollary we rewrite the addition formula in
theorem 4.1 in terms of the normalized little $q$-Bessel functions.
\begin{corollary} For $\nu >0,$ $t\in(0,q^{-1})$ and $N,\,x,\,z \in
\mathbb{N},$ The normalized little $q$-Bessel functions satisfy the
following addition formula
\begin{align}
j_{\nu}(q^x;q)\ _{1}\phi_1\left(\begin{matrix} \ tq^{N-x+1}\  \\
\
tq\end{matrix};q,q^{x-z+1}\right)\label{42}
\end{align}
\begin{align*}
&=\sum_{r=1}^{\infty}\sum_{s=0}^{r}(1-\delta_{rs})q^{(r+s)(z+s+1)}t^r
\frac{(q^{N-z+1};q)_{r-s}^{1/2}(q^N;q^{-1})_z^{1/2}}{(q^N;q^{-1})_{z+s-r}^{1/2}}\frac{(q^{\nu};q)_r
(q^{\nu};q)_s}{(q;q)_r (q;q)_s}\\&
 \times
j_{\nu+r+s}(q^{z+s};q)j_{\nu+r+s}(tq^{z+s};q)
p_{s}(q^{N-z},q^{\nu-1},q^{r-s};q)\ _{1}\phi_1\left(\begin{matrix} \
tq^{N-x+1}\   \\
tq\end{matrix};q^{x-z-s+r+1}\right)\\&+\sum_{r=1}^{\infty}\sum_{s=0}^{r}q^{(s+r)(z+r)}(tq)^s
\frac{(q^{N-z};q^{-1})_{r-s}^{1/2}(q^N;q^{-1})_z^{1/2}}{(q^N;q^{-1})_{z+s-r}^{1/2}}\frac{(q^{\nu};q)_r
(q^{\nu};q)_s}{(q;q)_r (q;q)_s}
\\&\times
j_{\nu+r+s}(q^{z+r};q)j_{\nu+r+s}(tq^{z+r};q)p_{s}(q^{N-z+s-r},q^{\nu-1},q^{r-s};q)\
_{1}\phi_1\left(\begin{matrix} \
tq^{N-x+1} \   \\
tq\end{matrix};q^{x-z-r+s+1}\right).
\end{align*}
\end{corollary}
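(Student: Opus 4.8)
The plan is to deduce the corollary from Theorem 4.1 purely by a change of notation, so that no new analysis is needed. The only input is the definition \eqref{12}: for every index $\mu$ and every $w\in\mathbb C$ one has $j_{\mu}(w;q)={}_{1}\phi_{1}(0;q^{\mu+1};q,w)$. Every confluent series appearing in \eqref{41} whose single numerator parameter is $0$ is of exactly this form, hence equals the corresponding normalized little $q$-Bessel function, whereas the series ${}_{1}\phi_{1}(tq^{N-x+1};tq;q,\cdot)$ (of big $q$-Bessel type) and the little $q$-Jacobi polynomials $p_{s}(q^{N-z},q^{\nu-1},q^{r-s};q)$ are not of this type and must be left as they stand.

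Concretely, I would specialise $l=0$ in \eqref{41} — legitimate because $l$ enters Theorem 4.1 only as a translation parameter, through the shifts $q^{x-l}$, $q^{z+s-l}$, $q^{z+r-l}$ and the prefactors $q^{(r+s)(z+s-l+1)}$, $q^{(s+r)(z+r-l)}$. On the left, ${}_{1}\phi_{1}(0;q^{\nu+1};q,q^{x-l})$ becomes ${}_{1}\phi_{1}(0;q^{\nu+1};q,q^{x})=j_{\nu}(q^{x};q)$, while the second left factor ${}_{1}\phi_{1}(tq^{N-x+1};tq;q,q^{x-z+1})$ is untouched. On the right, the four confluent series with top parameter $0$ become $j_{\nu+r+s}(q^{z+s};q)$, $j_{\nu+r+s}(tq^{z+s};q)$ in the first double sum and $j_{\nu+r+s}(q^{z+r};q)$, $j_{\nu+r+s}(tq^{z+r};q)$ in the second, the prefactors $q^{(r+s)(z+s-l+1)}$ and $q^{(s+r)(z+r-l)}$ collapsing to $q^{(r+s)(z+s+1)}$ and $q^{(s+r)(z+r)}$; the coefficients assembled from $q$-shifted factorials, the polynomials $p_{s}(q^{N-z},q^{\nu-1},q^{r-s};q)$, $p_{s}(q^{N-z+s-r},q^{\nu-1},q^{r-s};q)$, and the big $q$-Bessel factors ${}_{1}\phi_{1}(tq^{N-x+1};tq;q,\cdot)$ carry over verbatim. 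Assembling these substitutions turns \eqref{41} into \eqref{42}.

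Convergence of the two double series in \eqref{42} requires no separate discussion: they are precisely the series of Theorem 4.1 at $l=0$, whose absolute convergence (uniform in the limit parameter $m$) was already established in the proof of that theorem via Lemma 4.2 and Proposition 3.2. There is therefore essentially no obstacle here; the only care required is the bookkeeping that matches every argument and prefactor between \eqref{41} and \eqref{42} once $l$ has been set to $0$.
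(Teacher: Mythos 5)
Your proposal is correct and coincides with what the paper does: the corollary is obtained from Theorem 4.1 by taking $l=0$ and rewriting each ${}_{1}\phi_{1}$ series with numerator parameter $0$ as a normalized little $q$-Bessel function via the definition \eqref{12}, with convergence already covered by the proof of the theorem. The paper offers no separate argument beyond this rewriting, so your route is essentially identical to (and slightly more explicit than) the paper's.
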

Throughout the next formula is called addition formula for the
normalized little $q$-Bessel functions. Now if we let $N$ to
infinity in these formula we obtain for $\nu >0,$ $\mu >-1$ and
$N,\,x,\,z \in \mathbb{N}$
\begin{align}
j_{\nu}(q^x;q)j_{\mu}(q^{x-z+1};q)
\end{align}
\begin{align*}&
=\sum_{r=1}^{\infty}\sum_{s=0}^{r}(1-\delta_{rs})q^{(r+s)(z+s+1)}q^{\mu
r} \frac{(q^{\nu};q)_r(q^{\nu};q)_s}{(q;q)_r (q;q)_s}\\&
 \times
j_{\nu+r+s}(q^{z+s};q)j_{\nu+r+s}(q^{z+s+\mu};q)j_{\mu}(q^{x-z-s+r+1})
\\&+\sum_{r=1}^{\infty}\sum_{s=0}^{r}q^{(s+r)(z+r)}q^{s(1+\mu)}
\frac{(q^{\nu};q)_r
(q^{\nu};q)_s}{(q;q)_r (q;q)_s}
\\&\times
j_{\nu+r+s}(q^{z+r};q)j_{\nu+r+s}(q^{z+r+\mu};q)
j_{\mu}(q^{x-z-r-s+1};q).
\end{align*}
\section{Product formula}Taking $l=m+k,$ with $k \in \mathbb{N}$ in formula (6.2)
of \cite{FlorisKo} gives the following product formula for the
little $q$-Jacobi polynomials:
\begin{align}
&(q^z,tq^z;q)_{k}^{1/2}p_{m}(q^z;q^{\nu},q^{k};q)p_{m}(tq^z;q^{\nu},q^{k};q) \label{51}
\end{align}
\begin{align*}&=
(1-q^{\nu})\sum_{N=z+k}^{\infty}\sum_{x=0}^{N}(-1)^{k}\frac{(q^{x-k+1};q)_{k}}{(q^{N-k+1};q)_{k}^{1/2}}
\frac{(q^N;q^{-1})_{N-x}(tq;q)_{N-x}}{(q;q)_{N-x}} (tq)^{x-k/2}\\&
\times q^{\nu(N-z-k)}
 \widehat{K}_{z}\left(
q^{x-N};t,N-k;q\right)\widehat{K}_{z+k}\left(
q^{x-N};t,N;q\right)p_{m}(q^{x-k};q^{\nu},q^{k};q).
\end{align*}
Formally the product formula for the $_{1}\phi_{1}$ series can be
obtained as a limit case of the  product formula \eqref{51} for
little $q$-Jacobi polynomials. In \eqref{51} just replace $z$ by
$z+k$ and make the substitution $N\rightarrow N+m$ and $x\rightarrow
x+m$. and let $m$ to $\infty,$ we obtain the formula cited in the
following proposition.
\begin{proposition} For $\nu >0,$ and $x,\,y,\,z\,\in \mathbb{Z}$ with $x\leq y,$ we
have
\begin{align}
\ _{1}\phi_1\left(\begin{matrix} \ 0\  \\
\ q^{\nu+1}\end{matrix};q,q^{z-m}\right)\ _{1}\phi_1\left(\begin{matrix} \ 0\  \\
\ q^{\nu+1}\end{matrix};q,tq^{z-m}\right)\label{prod}
\end{align}
\begin{align*}&=(1-q^{\nu})
\frac{(tq;q)_{\infty}}{(q;q)_{\infty}}
\times \sum_{N=z+k}^{\infty}\sum_{x=-\infty
}^{N}\frac{(tq;q)_{N-x}
q^{\nu(N-z-k)}q^{(x-z)(1+y-x)}}{(q;q)_{N-x}(q^{N};q^{-1})_{z+k}^{1/2}(q^{N-k};q^{-1})_{z}^{1/2}(q^{N+1};q)_{\infty}^{1/2}(q^{N-k+1};q)_{\infty}^{1/2}}\\&\times
\,\ _{1}\phi_1\left(\begin{matrix} \ tq^{1+N-x-k}\  \\
\ tq\end{matrix};q,q^{1+x-z}\right)
\ _{1}\phi_1\left(\begin{matrix} \ tq^{1+N-x}\  \\
\ tq\end{matrix};q,q^{1+x-z-k}\right)\ _{1}\phi_1\left(\begin{matrix} \ 0\  \\
\ q^{\nu+1}\end{matrix};q,q^{x-k-m}\right)
\end{align*}
\end{proposition}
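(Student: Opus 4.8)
The plan is to obtain \eqref{prod} by the confluent-limit device already used for Theorem~4.1, now applied to the little $q$-Jacobi product formula \eqref{51}. Starting from \eqref{51}, I would carry out the substitution indicated above — replace $z$ by $z+k$, then put $N\mapsto N+m$ and $x\mapsto x+m$ in the double sum, and let $m\to\infty$ — so that the inner range $0\le x\le N$ opens up to the two-sided sum $\sum_{x=-\infty}^{N}$ of \eqref{prod}, the restriction $x\le y$ keeping the effective index set and the limiting series under control. Under this substitution each little $q$-Jacobi factor in \eqref{51} is brought into a form to which Proposition~3.1 applies (for instance $p_{m}(q^{x-k};q^{\nu},q^{k};q)$ becomes $p_{m}(q^{m}q^{x-k};q^{\nu},q^{k};q)$), and one gets, uniformly on compact subsets of $\mathbb{C}$,
\[
p_{m}\bigl(q^{m}w;q^{\nu},q^{k};q\bigr)\ \longrightarrow\ j_{\nu}(w;q)={}_{1}\phi_{1}\!\left(\left.\begin{matrix}0\\ q^{\nu+1}\end{matrix}\right\vert q,w\right),\qquad m\to\infty,
\]
which accounts for all the ${}_{1}\phi_{1}$-factors with lower parameter $q^{\nu+1}$ in \eqref{prod}.

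Next I would identify the limits of the two affine $q$-Krawtchouk factors $\widehat{K}_{z}(q^{x-N};t,N-k;q)$ and $\widehat{K}_{z+k}(q^{x-N};t,N;q)$. After the substitution these are affine $q$-Krawtchouk polynomials whose index and one of whose parameters grow with $m$, so Proposition~3.2 — together with the representation $K_{z}(q^{x-N};t,N;q)=(q^{N};q^{-1})_{z}^{-1}\,{}_{2}\phi_{1}(q^{-z},tq^{N-x+1};tq;q,q^{x+1})$ recorded in its proof — yields their limits as big $q$-Bessel functions $\mathcal{J}$, that is, as the factors ${}_{1}\phi_{1}(tq^{1+N-x-k};tq;q,q^{1+x-z})$ and ${}_{1}\phi_{1}(tq^{1+N-x};tq;q,q^{1+x-z-k})$ occurring in \eqref{prod}; along the way every $q$-shifted factorial involving $q^{N+m}$ tends to $1$ or to a power of $(q;q)_{\infty}$, since $q^{m}\to0$. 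Collecting the surviving prefactor $(1-q^{\nu})(tq;q)_{\infty}/(q;q)_{\infty}$, the square-root normalisations, and the exponential weights $q^{\nu(N-z-k)}$ and $q^{(x-z)(1+y-x)}$, one matches the right-hand side of \eqref{prod} term by term, the bookkeeping of the shifts by $k$ and $m$ being the only delicate point at this stage.

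The real content, exactly as in the proof of Theorem~4.1, is to justify interchanging $\lim_{m\to\infty}$ with the double summation over $N\ge z+k$ and $x\le N$; I would do this by the dominated convergence theorem. This requires a single majorant, independent of $m$, that is summable over the whole index set: the ratio $(tq;q)_{N-x}/(q;q)_{N-x}$ is controlled by $(-tq;q)_{\infty}/(q;q)_{\infty}$ via \eqref{inq1}; the little $q$-Jacobi factors are dominated using \eqref{inq}, exactly as in the estimate leading to \eqref{34}; and the ${}_{1}\phi_{1}$-factors produced by the $\widehat{K}$'s are controlled by the bound in Proposition~3.2 and by the inequality in Proposition~2.1. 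The delicate and decisive step is to verify that the resulting bound is genuinely summable: one must balance the exponential weights $q^{\nu(N-z-k)}$ and $q^{(x-z)(1+y-x)}$ against the growth, as $x\to-\infty$, of the arguments of the three ${}_{1}\phi_{1}$-factors — a growth whose order is precisely what the inequality in Proposition~2.1 quantifies — the hypotheses $\nu>0$ and $x\le y$ being exactly what force the tails in $N$ and in $x$ to converge. Producing this one estimate is the main obstacle; once it is in hand, dominated convergence applied to the substituted form of \eqref{51} delivers \eqref{prod}.
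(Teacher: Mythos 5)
Your proposal takes essentially the same route as the paper: the paper obtains \eqref{prod} precisely by replacing $z$ by $z+k$ in \eqref{51}, substituting $N\rightarrow N+m$ and $x\rightarrow x+m$, and letting $m\to\infty$, with the little $q$-Jacobi factors tending to $_{1}\phi_{1}$'s via Proposition~3.1 and the $\widehat{K}$'s handled via Proposition~3.2. Be aware, though, that the paper presents this only as a formal limit and never supplies the dominated-convergence majorant you correctly identify as the decisive step, so the summability estimate you flag as ``the main obstacle'' remains open in both your sketch and the paper's own treatment.
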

\begin{theorem} For $\nu >0,$ and $x,\,y,\,z\,\in \mathbb{Z},$ we
have the following product formula for the normalized little
$q$-Bessel functions

\begin{equation}
j_{\nu}(q^x;q)j_{\nu}(q^y;q)=\sum_{z=-\infty}^{\infty}\triangle_{\nu}(q^x,q^y,q^z;q)j_{\nu}(q^z;q),\label{besp}
\end{equation}
where
\begin{align}
&\triangle_{\nu}(q^x,q^y,q^z;q)= \label{delta}
\end{align}
\begin{align*}
\frac{(1-q^{\nu})}{(q;q)_{\infty}} \sum_{N}\frac{\,\left[(q^{1+y-x};q)_{\infty}\ _{1}\phi_1\left(\begin{matrix} \ q^{1+N+y-x-z}\  \\
\ q^{1+y-x}\end{matrix};q,q^{1+z-x}\right)
\right]^2
}{(q;q)_{N-z}(q^{N};q^{-1})_{x}(q^{N+1},q^{1+N+y-x-z};q)_{\infty}}
q^{\nu(N-x)+(1+y-x)(z-x)}.
\end{align*}
Furthermore, the kernel
\begin{equation}
q^{(\nu+1)z}\triangle_{\nu}(q^x,q^y,q^z;q),\label{kern}
\end{equation} is symmetric in $x,$
$y$ and $z.$
\end{theorem}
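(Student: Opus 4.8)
The plan is to obtain the product formula \eqref{besp}--\eqref{delta} as a parameter specialisation of Proposition~5.1, and then to establish the symmetry of the kernel by a direct computation resting on \eqref{symmetric} and Proposition~2.1.

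First I would specialise formula \eqref{prod}. Taking $t=q^{y-x}$ (legitimate when $y\ge x$, since then $q^{y-x}\in(0,q^{-1})$; the remaining integer cases follow from the $x\leftrightarrow y$ symmetry of the left side of \eqref{besp}) and relabelling the free indices so that the product of the two factors $\ _{1}\phi_1\!\left(\begin{matrix}0\\ q^{\nu+1}\end{matrix};q,\cdot\right)$ on the left of \eqref{prod} equals $j_\nu(q^x;q)\,j_\nu(q^y;q)$ by \eqref{12}, the inner factor $\ _{1}\phi_1\!\left(\begin{matrix}0\\ q^{\nu+1}\end{matrix};q,q^{\bullet}\right)$ in the double sum on the right becomes, again by \eqref{12}, a copy of $j_\nu(q^{z};q)$ once its argument exponent is renamed $z$; the double sum over $(N,z)$ then reorganises into a single sum over $z$ of $j_\nu(q^z;q)$ against a coefficient. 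To identify that coefficient with \eqref{delta}, I would use \eqref{symmetric} to bring each of the two factors $\ _{1}\phi_1\!\left(\begin{matrix}tq^{\bullet}\\ tq\end{matrix};q,q^{\bullet}\right)$ occurring in \eqref{prod} to the common shape $(q^{1+y-x};q)_\infty\,\ _{1}\phi_1\!\left(\begin{matrix}q^{1+N+y-x-z}\\ q^{1+y-x}\end{matrix};q,q^{1+z-x}\right)$, so that their product is exactly the square in \eqref{delta}; the surviving $q$-shifted factorials and $q$-powers then assemble into the displayed prefactor. The interchange of summations underlying this rearrangement is justified by absolute convergence, which one gets from the estimate in Proposition~2.1 together with \eqref{inq1}--\eqref{inq}.

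For the symmetry statement it is enough to treat the two transpositions $(x\,y)$ and $(x\,z)$, since these generate $S_3$ and the weight $q^{(\nu+1)z}$ is unaffected by $(x\,y)$. Invariance under $(x\,y)$ amounts to $\triangle_\nu(q^x,q^y,q^z;q)$ being symmetric in $x$ and $y$, which is clear from \eqref{besp} (it is the $j_\nu(q^z;q)$-coefficient of the symmetric product $j_\nu(q^x;q)\,j_\nu(q^y;q)$) and can also be checked directly from \eqref{delta} by applying \eqref{symmetric} to the $\ _{1}\phi_1$ block and shifting the summation index. The substantive case is $(x\,z)$.

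To handle $(x\,z)$, I would first rewrite the denominator in \eqref{delta} as a product of infinite $q$-shifted factorials via $(q^{N};q^{-1})_{x}\,(q^{N+1};q)_\infty=(q^{N-x+1};q)_\infty$ and $(q;q)_{N-z}=(q;q)_\infty/(q^{N-z+1};q)_\infty$, so that the summand of $q^{(\nu+1)z}\triangle_\nu(q^x,q^y,q^z;q)$ reads
\[ \frac{1-q^\nu}{(q;q)_\infty^2}\,\Phi(N)^2\,\frac{(q^{N-z+1};q)_\infty}{(q^{N-x+1};q)_\infty\,(q^{N+1+y-x-z};q)_\infty}\,q^{\nu(N-x+z)+(1+y-x)(z-x)+z}, \]
where $\Phi(N)=(q^{1+y-x};q)_\infty\,\ _{1}\phi_1\!\left(\begin{matrix}q^{1+N+y-x-z}\\ q^{1+y-x}\end{matrix};q,q^{1+z-x}\right)$. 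Applying \eqref{symmetric} to $\Phi(N)$ turns it into the same series with lower parameter $q^{1+z-x}$; when $z>x$ this exponent is non-positive, and the $\ _{1}\phi_1$ must then be understood through Proposition~2.1 with $1-n=1+x-z$. Running this through for the kernel with $x$ and $z$ interchanged expresses $\Phi$ at the swapped arguments as an explicit monomial --- a power of $-q^{1+y-z}$, the factor $q^{\binom{z-x}{2}}$, and the finite product $(q^{1+N-z};q)_{z-x}=(q^{1+N-z};q)_\infty/(q^{1+N-x};q)_\infty$ --- times $\Phi(N)$ itself. Substituting this identity back, the finite product cancels precisely the mismatch between the two sets of infinite $q$-shifted factorials, so the summands of $q^{(\nu+1)z}\triangle_\nu(q^x,q^y,q^z;q)$ and $q^{(\nu+1)x}\triangle_\nu(q^z,q^y,q^x;q)$ end up with identical $q$-shifted-factorial parts, and only an algebraic identity between the two $q$-power exponents (quadratic in $x,y,z,N$) is left to verify. \emph{I expect this last stretch to be the main obstacle}: one has to invoke Proposition~2.1 correctly to assign a value to the $0\cdot\infty$ that arises whenever a numerator or denominator parameter exponent becomes non-positive, and then keep track of every resulting monomial through the cancellation. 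The rest is routine; in particular the positivity of $q^{(\nu+1)z}\triangle_\nu(q^x,q^y,q^z;q)$ is immediate from \eqref{delta}, since on the range of $N$ over which the series runs all the $q$-shifted factorials are positive and the expression is $(1-q^\nu)/(q;q)_\infty$ times a sum of squares with positive coefficients.
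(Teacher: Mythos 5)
Your route to \eqref{besp} itself (specialise $t$ to a power of $q$ in \eqref{prod}, use \eqref{symmetric} to fuse the two ${}_1\phi_1$ factors into the square appearing in \eqref{delta}, relabel) is exactly the paper's one-sentence derivation, and your machinery for the symmetry (reduce to the transpositions $(x\,y)$ and $(x\,z)$; transform the ${}_1\phi_1$ block by \eqref{symmetric} and Proposition 2.1 so that a finite factor $(q^{1+N-x};q)_{x-z}$ appears and cancels the mismatch of the $q$-shifted factorials; then match exponents) is also the paper's. The gap is in the normalisation you carry through that computation, and it kills precisely the step you yourself flag as ``the main obstacle.'' Carrying out the $(x\,z)$ transformation (two applications of \eqref{15} sandwiching \eqref{symmetric}) one finds
\begin{equation*}
q^{(\nu+1)x}\triangle_{\nu}(q^x,q^y,q^z;q)=\frac{1-q^{\nu}}{(q;q)_{\infty}}\sum_{N}\frac{\Psi_N^2\; q^{\nu N+(y-z)(x-z)+x}}{(q;q)_{N-x}\,(q^{N};q^{-1})_{z}\,(q^{N+1},q^{1+N+y-x-z};q)_{\infty}},\qquad
\Psi_N=(q^{1+y-z};q)_{\infty}\,{}_1\phi_1\!\left(\begin{matrix} q^{1+N+y-x-z}\\ q^{1+y-z}\end{matrix};q,q^{1+x-z}\right),
\end{equation*}
which coincides term by term with $q^{(\nu+1)z}\triangle_{\nu}(q^z,q^y,q^x;q)$ read off from \eqref{delta}. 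So the quantity that is actually invariant under $(x\,z)$ is $q^{(\nu+1)x}\triangle_{\nu}(q^x,q^y,q^z;q)$, with the weight attached to the \emph{first} slot; your target identity $q^{(\nu+1)z}\triangle_{\nu}(q^x,q^y,q^z;q)=q^{(\nu+1)x}\triangle_{\nu}(q^z,q^y,q^x;q)$ differs from the true one by the factor $q^{2(\nu+1)(x-z)}$ and is false for $x\neq z$: the ``algebraic identity between the two $q$-power exponents'' that you defer would not close. (To be fair, the theorem as printed writes the weight on $z$, which is inconsistent with the paper's own proof; but a correct blind attempt has to discover that the exponents only balance with the weight on the first argument, and yours does not.)

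The same mis-normalisation undermines your treatment of $(x\,y)$. You claim $\triangle_{\nu}(q^x,q^y,q^z;q)=\triangle_{\nu}(q^y,q^x,q^z;q)$ and justify it by saying the kernel is the $j_{\nu}(q^z;q)$-coefficient of the symmetric product $j_{\nu}(q^x;q)j_{\nu}(q^y;q)$. That inference is invalid: the expansion \eqref{besp} is not known to be unique (the functions $j_{\nu}(q^z;q)$, $z\in\mathbb{Z}$, satisfy nontrivial linear relations coming from the orthogonality of the $q$-Hankel kernel), so symmetry of the left-hand side says nothing about pointwise symmetry of the coefficients. And indeed the relation the paper establishes, \eqref{sym2}, is $q^{(\nu+1)x}\triangle_{\nu}(q^x,q^y,q^z;q)=q^{(\nu+1)y}\triangle_{\nu}(q^y,q^x,q^z;q)$ (obtained by \eqref{symmetric} together with the index shift $N\mapsto N+x-y$), which carries the extra factor $q^{(\nu+1)(x-y)}$ and contradicts your claimed $x\leftrightarrow y$ symmetry of $\triangle_{\nu}$ itself. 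The same point affects your reduction of the case $y<x$ of \eqref{besp} to the case $y\geq x$ ``by symmetry of the left side'': that argument only yields the expansion with kernel $\triangle_{\nu}(q^y,q^x,q^z;q)$, not $\triangle_{\nu}(q^x,q^y,q^z;q)$. In short, your method is the paper's, but the bookkeeping of the weight $q^{(\nu+1)\,\cdot}$ — which is the entire content of the symmetry statement — is wrong, and the two exponent checks you postpone are exactly where it would surface.
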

\begin{proof}The formula \eqref{besp} follows from \eqref{prod}
and \eqref{12}. To show the kernel \eqref{kern} is symmetric it
suffices to prove
\begin{equation}
q^{(\nu+1)x}\triangle(q^x,q^y,q^z;q)=q^{(\nu+1)z}\triangle(q^z,q^y,q^x;q),
\end{equation}
and
\begin{equation}
q^{(\nu+1)x}\triangle(q^x,q^y,q^z;q)=q^{(\nu+1)y}\triangle(q^y,q^x,q^z;q).\label{sym2}
\end{equation}
By application of \eqref{symmetric} and twice \eqref{15}, we can
write, for $x,\,y,\,z\, \in \mathbb{Z}$
\begin{align*}
&(q^{1+y-x};q)_{\infty}\ _{1}\phi_1\left(\begin{matrix} \ q^{1+N+y-x-z}\  \\
\ q^{1+y-x}\end{matrix};q,q^{1+z-x}\right)=(-1)^{x-z}q^{(x-z)(1+2y-x-z)/2}\\&(q^{1+N-x};q)_{x-z}(q^{1+y-z};q)_{\infty}
\ _{1}\phi_1\left(\begin{matrix} \ q^{1+N+y-x-z}\  \\
\ q^{1+y-z}\end{matrix};q,q^{1+x-z}\right),
\end{align*}
and by the following elementary relations
\begin{align*}
&(q;q)_{N-z}=(q;q)_{N-x}(q^{1+N-x};q)_{x-z},\\&
(q^{N};q^{-1})_{x}=(q^{N};q^{-1})_{z}(q^{1+N-x};q)_{x-z},
\end{align*}
we get
\begin{align*}
&q^{(\nu+1)x}\triangle(q^x,q^y,q^z;q)=\\&
\frac{(1-q^{\nu})}{(q;q)_{\infty}} \sum_{N}\frac{\,\left[(q^{1+y-z};q)_{\infty}\ _{1}\phi_1\left(\begin{matrix} \ q^{1+N+y-x-z}\  \\
\ q^{1+y-z}\end{matrix};q,q^{1+x-z}\right)
\right]^2
}{(q;q)_{N-x}(q^{N};q^{-1})_{z}(q^{N+1},q^{1+N+y-x-z};q)_{\infty}}
q^{\nu N+(y-z)(x-z)+x}
\end{align*}
Hence,
\begin{equation*}
q^{(\nu+1)x}\triangle(q^x,q^y,q^z;q)=q^{(\nu+1)z}\triangle(q^z,q^y,q^x;q)
\end{equation*}
For \eqref{sym2}, applying relation  \eqref{symmetric} and replacing
$N$ by $N+x-y$ in the second member of \eqref{delta}, the kernel
$\triangle(q^x,q^y,q^z;q)$ becomes
\begin{align*} &\triangle(q^x,q^y,q^z;q)=
\frac{1-q^{\nu}}{(q;q)_{\infty}} \sum_{N}\frac{(q^{1+N-z};q)^2_{x-y}}{(q;q)_{N+x-y-z}}\\&
\times \frac{\,\left[(q^{1+x-y};q)_{\infty}\ _{1}\phi_1\left(\begin{matrix} \ q^{1+N+x-y-z}\  \\
\ q^{1+y-x}\end{matrix};q,q^{1+z-x}\right) \right]^2
}{(q^{N+x-y};q^{-1})_{x}(q^{1+N+x-y},q^{1+N-z};q)_{\infty}}
q^{\nu(N-y)+(1+x-y)(z-y)}.
\end{align*}
The result follows from the elementary relations
\begin{align*}
&(q^{1+N-z};q)_{x-y}=\frac{1}{(q^{1+N+x-y-z};q)_{y-x}},\\&
(q;q)_{N+x-y-z}(q^{1+N+x-y-z};q)_{y-x}=(q;q)_{N-z},\\&
(q^{N+x-y};q^{-1})_{x}=\frac{(q^N;q^{-1})_y}{(q^{1+N+x-y};q)_{y-x}},\\&
(q^{1+N+x-y-z};q)_{y-x}(q^{1+N-z;q})_{\infty}=(q^{1+N+x-y-z};q)_{\infty}.
\end{align*}
\end{proof}

\end{document}